\documentclass{article}

\usepackage{amsmath,amssymb,amsthm}
\usepackage{fullpage}
\usepackage{microtype}
\usepackage{enumitem}
\usepackage{thmtools}
\usepackage{thm-restate}
\usepackage{xcolor}
\usepackage{algorithm}
\usepackage[
    indLines=true,
    noEnd=true,
    rightComments=true,
    italicComments=true,
]{algpseudocodex}
\algrenewcommand\algorithmicrequire{\textbf{Input:}}
\algrenewcommand\algorithmicensure{\textbf{Output:}}

\declaretheorem[name=Theorem]{theorem}
\declaretheorem[name=Lemma,sibling=theorem]{lemma}

\usepackage[
    colorlinks=true,
    linkcolor=blue!62!black,
    citecolor=green!48!black,
    urlcolor=blue!70!black,
    linktoc=page
]{hyperref}
\usepackage[nameinlink,noabbrev]{cleveref}

\crefname{theorem}{Theorem}{Theorems}
\Crefname{theorem}{Theorem}{Theorems}
\crefname{lemma}{Lemma}{Lemmas}
\Crefname{lemma}{Lemma}{Lemmas}
\crefname{algorithm}{Algorithm}{Algorithms}
\Crefname{algorithm}{Algorithm}{Algorithms}
\crefname{section}{Section}{Sections}
\Crefname{section}{Section}{Sections}

\newcommand{\algline}[2]{%
    \hyperref[#2]{Line~\ref*{#2}} of \cref{#1}%
}

\newcommand{\supp}{\operatorname{supp}}
\newcommand{\val}{\operatorname{val}}
\newcommand{\norm}[1]{\left\lVert #1\right\rVert}

\title{A Simple Las Vegas Algorithm for Sparse Nonnegative Convolution}
\author{
    Trevor Vaughn\thanks{Carnegie Mellon University. Email: tnvaughn@cmu.edu}
}

\begin{document}
\maketitle

\begin{abstract}
    Let $A, B \in \mathbb{Z}_{\ge 0}^n$ be nonnegative vectors and let $t = |\supp(A \star B)|$. We give a Las Vegas algorithm that computes $A \star B$ in $O(t \log t)$ expected time. More generally, for every $0 < \delta \le \frac{1}{2}$, the algorithm terminates within $O(t \log t \log \frac{1}{\delta})$ time with probability at least $1 - \delta$. The algorithm uses dense convolution, linear hashing, and the length reduction of \cite{BFN22}. Its main ingredient is a carry-free representation of the indices as vectors of constant dimension $d$ whose coordinates have size $O(t / \log t)$. We can then take our hash function to be the inner product with a random element of $\mathbb{F}_p^d$ for a prime $p$ of size $\Omega(t / \log t)$: this preserves addition and gives collision probability exactly $1/p$, while identities regarding the moments of the vectors identify and recover the isolated terms as in \cite{BFN22}. Our expected running time matches that of Jin and Xu~\cite{JX24} while using substantially different tools and yielding a simpler algorithm. Note that their algorithm also terminates within $O(t \log t)$ time with probability at least $1 - \frac{1}{t}$, while our tail bound is weaker.
\end{abstract}

\section{Introduction}
The convolution of two vectors $A, B \in \mathbb{Z}^n$ is the vector $C = A \star B$ defined by $C_z = \sum_{x + y = z} A_x B_y$. Dense convolution can be solved in $O(n \log n)$ time by the Fast Fourier Transform. When the inputs are sparse and the output has only $t = |\supp(A \star B)|$ nonzero positions, however, we would like a running time depending on $t$ rather than $n$, which is potentially much larger. The output-sensitive version of the problem was first solved in near-linear time by Cole and Hariharan~\cite{CH02}. Their Las Vegas algorithm runs in $O(t \log^2 n)$ time. Bringmann, Fischer, and Nakos subsequently gave an $O(t \log t + \log^{O(1)} n)$ Monte Carlo algorithm \cite{BFN21}. They also developed more direct Las Vegas algorithms, including simple $O(t \log^2 t)$-expected time and $O(t \log t \log \log t)$-expected time algorithms, together with the first deterministic near-linear time algorithm~\cite{BFN22}. More recently, Jin and Xu used the large sieve inequality to obtain an $O(t \log t)$-expected time Las Vegas algorithm which also runs in $O(t \log t)$ with probability $1 - \frac{1}{t}$ \cite{JX24}. We give a Las Vegas algorithm with $O(t \log t)$ expected running time. Our approach is closer to the hashing and recovery algorithms of \cite{BFN22} while the $O(t \log t)$-time algorithm of \cite{BFN21} and the $O(t \log t)$-time Las Vegas algorithm of \cite{JX24} use broader collections of tools and are more complicated. Our result is weaker than Jin and Xu as our algorithm can be modified to terminate in time $O(t \log^2 t)$ with probability $1 - \frac{1}{t}$ rather than $O(t \log t)$.

Our main result is the following.

\begin{restatable}[Sparse nonnegative convolution]{theorem}{mainthm}
\label{thm:main}
Let \(A,B\in\mathbb Z_{\ge0}^n\), and let
\(t=|\supp(A\star B)|\).  There is a Las Vegas algorithm that computes
\(A\star B\) in \(O(t\log(t+2))\) expected time.  Moreover, for every
\(0<\delta\le1/2\), the algorithm can be scheduled to terminate within
\[
    O\bigl(t\log(t+2)\log(1/\delta)\bigr)
\]
time with probability at least \(1-\delta\).
\end{restatable}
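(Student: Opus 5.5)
We follow the abstract's outline: length reduction, a carry-free vector representation of the indices, linear hashing over $\mathbb{F}_p^d$, and moment-based recovery, all inside an iterative peeling loop driven by a guess for $t$.

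\textbf{Reduction and representation.} First we apply the length reduction of \cite{BFN22}. It lets us assume $n = t^{O(1)}$, at the cost of $O(t\log t)$ expected time plus a verifiable guess $T$ of $t$, obtained by doubling. Next we fix a base $b = \Theta(T/\log T)$ and a constant $d$ with $b^{d-1} \ge n$. Each index $x<n$ is written in base $b$ as a vector $\vec x\in\{0,\dots,b-1\}^d$; since $n=T^{O(1)}$, $d=O(1)$ suffices. The point of ``carry-free'' is that we take a prime $p > 2b$, so $x+y$ corresponds to $\vec x+\vec y$ coordinatewise without wraparound. Concretely, we treat the index as the integer vector $\vec x+\vec y\in\{0,\dots,2b-2\}^d$ and recover $z$ from it afterwards. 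The hash $h_r(\vec v)=\langle r,\vec v\rangle \bmod p$, with $r$ uniform in $\mathbb{F}_p^d$, is then additive: $h(\vec x)+h(\vec y)=h(\vec x+\vec y)$. Moreover, distinct vectors in $\{0,\dots,2b-2\}^d$ remain distinct mod $p$, so each pair collides with probability exactly $1/p$.

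\textbf{One round.} We compute the hashed convolutions $\sum_{x,y} A_xB_y\,\omega$ at bucket $h(\vec x)+h(\vec y)\bmod p$ by dense FFT of length $O(p)$, done for each of the weightings $1$ and $\vec x_i$, $i\le d$. Using the moment identity $(\vec x+\vec y)_i = x_i+y_i$, each weighted product $(A\cdot x_i)\star B + A\star (B\cdot y_i)$ gives the first moment of bucket contents; we also include a second moment $\sum_i (x_i+y_i)^2$, expanded into products of moments of $A$ and $B$. These are $O(d^2)=O(1)$ dense convolutions of length $O(T/\log T)$, costing $O(T)$ time. A bucket is declared isolated when the value-weighted moments satisfy $m_0 m_2 = \|m_1\|^2$. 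By Cauchy--Schwarz with nonnegative weights, this equality holds iff all mass in the bucket sits at one vector, which is where nonnegativity is used. For such a bucket we read off $\vec z = m_1/m_0$ and the value $m_0$.

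\textbf{Iteration and accounting.} Recovered terms are subtracted from future rounds by subtracting their hashed contributions, as in \cite{BFN22}, costing time linear in the recovered count. A fixed term is isolated with probability at least $1-t/p \ge 1-O(\log T/T)\cdot T$. That is too weak, so instead we use $p=\Theta(T/\log T)$ with $\Theta(\log T)$ independent hashes per round, or equivalently bucket load $\log T$. This step is the main obstacle: balancing bucket count against FFT cost so that each round costs $O(t)$ and recovers a constant fraction of the remaining terms in expectation. We would try hashing only the residual and shrinking $T$ geometrically as terms are peeled, analogous to \cite{BFN22}'s $O(t\log t\log\log t)$ scheme. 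The goal is a geometric sum of $O(t\log t)$ total expected time. Correctness is certified because we stop only when the residual sketch is identically zero in a final check, which is the Las Vegas property. For the tail bound, a Markov-based restart gives constant success per $O(t\log t)$ block, so $O(\log(1/\delta))$ blocks suffice.
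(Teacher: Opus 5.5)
\textbf{There is a genuine gap: the recovery schedule that gives $O(t\log t)$ is missing, and the tentative plan would not work.} Your ingredients match the paper's: length reduction, a base-$\Theta(T/\log T)$ carry-free lift, the inner-product hash over $\mathbb F_p^d$ with collision probability exactly $1/p$, the Cauchy--Schwarz/variance singleton test, and nonnegativity-based certification. But you leave the schedule of moduli and passes as an open ``main obstacle.''

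Take $p=\Theta(T/\log T)$ with a residual of $\Theta(T)$ lifted terms. Buckets then carry load $\Theta(\log T)$, so one hash isolates a fixed term with probability only about $e^{-\Theta(\log T)}=T^{-\Theta(1)}$; the pairwise collision bound $1-r/p$ gives nothing here. $\Theta(\log T)$ independent hashes therefore recover a vanishing fraction, not a constant one.

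Conversely, no pass can cost $O(t)$ and remove a constant fraction while the residual $r$ is $\Theta(T)$. A pass recovers at most $p$ terms, so you need $p=\Omega(r)$ and an FFT costing $\Theta(r\log r)$.

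You also cannot ``hash only the residual.'' The residual is an unknown part of the output, so every pass must rehash the full lifted inputs under the fresh $\boldsymbol\alpha$ and subtract the recovered records. That costs $\Omega(a+b+|\mathcal R|)$, possibly $\Theta(T)$, however small $p$ is. Hence only $O(\log T)$ passes are affordable in total. A plain geometric scheme over $\log T$ levels then either cannot be union-bounded, since each level fails with constant probability, or, with $\Theta(\log\log T)$ repetitions per level, costs $\Theta(T\log T\log\log T)$.

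\textbf{The missing idea is the two-phase schedule of \textsc{LiftedTrial}.}

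\emph{Geometric phase.} Stage $i$ uses a prime $p_i\in[64\lceil r_i\rceil,128\lceil r_i\rceil]$ with $r_i=2^{-i}2^dT$ and runs $i+1$ independent passes. A fixed term survives $k$ passes with probability at most $64^{-k}$, so by Markov's inequality stage $i$ fails to halve the residual with probability at most $2\cdot64^{-(i+1)}$. These probabilities sum to $2/63$. The stages stop once at most $\beta=\Theta(T/\log T)$ terms remain, i.e.\ after $L=O(\log\log T)$ stages. The total cost is $O\bigl(TL^2+\sum_i(i+1)r_i\log r_i\bigr)=O(T\log T)$.

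\emph{Final phase.} Run $\lceil3\log_{64}(T+2)\rceil$ passes with $p_\star=\Theta(\beta)$. Each costs $O(T+\beta\log\beta)=O(T)$. A union bound over at most $\beta$ survivors shows some term is missed with probability at most $(T+2)^{-2}$.

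This is why the lift base is $\Theta(T/\log T)$: it is the largest modulus scale with $p\log p=O(T)$, and stopping the geometric phase there keeps $L=O(\log\log T)$.

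\textbf{Two smaller omissions.}
\begin{itemize}
\item The hash acts on lifted indices, so you need the carry-pattern bound $|\supp(D)|\le2^dt$, not $t$.
\item The doubling over $T$ and the tail bound need each trial to run in worst-case $O(T\log T)$ time even when $T<t$. A Markov cutoff on the expected time would require knowing $t$. The paper instead aborts once $|\mathcal R|>r_0$, then runs $\lceil\log_{10}(1/\delta)\rceil$ trials per scale.
\end{itemize}
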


\paragraph{Technical overview.}
We first consider the case in which the length \(n\) is polynomial in a
guess \(T\ge t\) for the output sparsity.  Set
\(\beta=\Theta(T/\log T)\) and represent every index by its
\(d=\lceil\log_\beta n\rceil=O(1)\) base-\(\beta\) digits.  We then lift
\(A\) and \(B\) to \(d\)-dimensional vectors and replace ordinary addition
by coordinatewise addition.  This removes carries and makes the lifted
convolution compatible with the additive hash we use.  The lift does not
substantially increase the output sparsity: for each scalar output index,
the \(d\) carry bits determine the corresponding coordinatewise sum, so
that at most \(2^d=O(1)\) lifted indices map to it.  Consequently, the
lifted convolution has \(O(t)\) nonzero terms.

A standard additive hash maps an integer \(x\) to \(x\bmod p\), where
\(p\) is a random prime of some prescribed size \(P\).  This hash has two
limitations for our purposes.  First, two indices \(x\ne y\) collide
precisely when \(p\mid x-y\).  Since one difference may have several prime
divisors in the sampled range, the resulting collision probability is
only \(O(\log n/P)\), rather than \(O(1/P)\).

The carry-free lift avoids this problem.  Choose a fixed prime \(p>2\beta\),
sample \(\boldsymbol\alpha\in\mathbb F_p^d\) uniformly, and define
\(h(\mathbf u)=\langle\boldsymbol\alpha,\mathbf u\rangle\bmod p\).
Every coordinate difference between two lifted indices has absolute value
less than \(2\beta<p\), so distinct lifted indices remain distinct in
\(\mathbb F_p^d\).  The random linear functional therefore makes any fixed
pair collide with probability exactly \(1/p\).  The hash is additive because of the coordinatewise addition.
Hence the weights and moments of all hashed buckets can be computed from
the lifted inputs using a constant number of cyclic convolutions.  The
zeroth, first, and second moments determine whether a bucket contains a
single lifted term: after normalizing its weights to a probability
distribution, the relevant identity is precisely the condition that the
sum of the coordinate variances is zero.  We recover only such singleton
buckets, so every recovered term is correct.

Suppose at most \(r\) lifted terms remain.  With \(p=\Theta(r)\), a union
bound shows that any fixed term collides with another one with probability
bounded away from one.  We use geometrically decreasing sizes for the moduli
and repeat the recovery pass more often at later stages, reducing the
residual support to \(\beta=\Theta(T/\log T)\) with constant probability.
We then perform \(O(\log T)\) final passes with
\(p=\Theta(\beta)\) to recover every remaining term with constant probability.  The choice of
\(\beta\) yields the desired running time: a final convolution
costs \(O(\beta\log\beta)=O(T)\), so the final stage costs
\(O(T\log T)\).  The earlier stages form a geometrically decreasing sum
of the same order.  Finally, nonnegativity certifies completeness through checking against
\(\norm{A\star B}_1=\norm{A}_1\norm{B}_1\) since our output is always coordinatewise less than $A \star B$.

The carry-free lift requires \(n\le T^{O(1)}\).  We remove this assumption
using the length reduction of Bringmann, Fischer, and Nakos~\cite{BFN22},
which maps the original instance to a constant number of instances of
polynomial length and output sparsity \(O(t)\).  Trying geometrically
increasing values of \(T\) then gives the final Las Vegas algorithm.

\section{Preliminaries}
For $n \in \mathbb{N}$, we write $[n] = \{0, \dots, n-1\}$. For a vector $V$ we write $\supp(V) = \{i : V_i \ne 0\}$, and write $\norm{V}_1 = \sum_i |V_i|$. Vectors are given as lists of their nonzero coordinates and values at those points. For $A, B \in \mathbb{Z}^n_{\ge 0}$, define $C = A \star B \in \mathbb{Z}^{2n-1}_{\ge0}$ by $C_z = \sum_{\substack{x, y \in [n]\\x+y=z}} A_x B_y$. We write $a = |\supp A|$, $b = |\supp B|$, and $t = |\supp C| = |\supp (A \star B)|$.

If either top-level input is identically zero, we return the sparsely
represented zero vector immediately.  We henceforth assume that \(A\)
and \(B\) are not identically zero.  Let
\[
    \Delta:=\max\{\norm{A}_\infty,\norm{B}_\infty\}.
\]
For vectors \(P,Q\) indexed by \(\mathbb Z_m\), their cyclic convolution
is defined by
\[
    (P\star_m Q)_r
    =\sum_{\substack{x,y\in\mathbb Z_m\\x+y=r}}P_xQ_y.
\]

We work in the standard word-RAM model.  A word has
\[
    w=\Omega\!\left(
        \log(t+2)+\log\bigl(n(\Delta+1)\bigr)
    \right)
\]
bits, and arithmetic on integers occupying a constant number of words
takes constant time.  This is the usual convention for
sparse convolution; see, for example, \cite{BFN22,JX24}.

We use the standard fact that the ordinary or
cyclic convolution of two length-\(m\) integer vectors can be computed
in \(O(m\log(m+2))\) word operations by the Fast Fourier Transform whenever the input and
output coefficients occupy \(O(1)\) words.

Under the schedule in the proof of
\cref{thm:main}, every index and modulus used by the algorithm is
\(n^{O(1)}\), and every coefficient, moment, and product in a variance
test has magnitude at most
\(\bigl(n(\Delta+1)\bigr)^{O(1)}\).  Consequently all of these
quantities occupy \(O(1)\) words.  Unless an array is explicitly
materialized for a dense convolution, vectors of large length
are stored as sparse lists or dictionaries of their nonzero
index--value pairs.

\section{Carry-Free Lift}
\label{sec:carry_free}

We first analyze the case in which the length \(n\) is polynomial in the current output-sparsity estimate \(T\); \cref{sec:length_reduction} will reduce the general case to this setting.  The lift replaces each scalar index by a constant number of small coordinates.  Coordinatewise addition is then carry-free, making it compatible with the additive hash used in \cref{sec:isolated_terms}, while the limited number of possible carry patterns ensures that the output sparsity increases by only a constant factor.

\begin{lemma}
\label{lem:input_support}
We have \(a+b-1\le t\le ab\), and hence \(\max\{a,b\}\le t\).
\end{lemma}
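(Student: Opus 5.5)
The plan is to bound $t$ from above by counting pairs and from below by a sumset argument. Nonnegativity is what makes the support of $C$ the sumset. Since $A,B\ge 0$, we have $C_z=\sum_{x+y=z}A_xB_y>0$ if and only if some pair $(x,y)\in\supp A\times\supp B$ satisfies $x+y=z$. The reason is that every summand is nonnegative, and a summand is positive exactly when both factors are nonzero. Hence
\[
    \supp(C)=\supp(A)+\supp(B)=\{x+y : x\in\supp A,\ y\in\supp B\}.
\]
This is the only place nonnegativity enters, and it is the key step. Without it, cancellations could shrink the support and the lower bound would fail.

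For the upper bound, note that the map $(x,y)\mapsto x+y$ from $\supp A\times\supp B$ onto $\supp C$ is surjective, so $t\le ab$.

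For the lower bound, I will use the standard sumset estimate $|X+Y|\ge |X|+|Y|-1$ for nonempty finite sets of integers. Write $X=\{x_1<\dots<x_a\}=\supp A$ and $Y=\{y_1<\dots<y_b\}=\supp B$. These are nonempty by the standing assumption that neither input is identically zero. The chain
\[
    x_1+y_1<x_1+y_2<\dots<x_1+y_b<x_2+y_b<\dots<x_a+y_b
\]
consists of $a+b-1$ strictly increasing, hence distinct, elements of $X+Y$. Therefore $t\ge a+b-1$.

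Finally, since $a,b\ge 1$, we get $\max\{a,b\}\le a+b-1\le t$. No step here is a real obstacle. The only care needed is to make explicit that the nonzero-input assumption is what guarantees $a,b\ge1$ and makes the chain argument valid.
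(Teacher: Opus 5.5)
Your proof is correct and follows essentially the same route as the paper: the upper bound comes from counting pairs, and the lower bound from an explicit strictly increasing chain of $a+b-1$ sums in $\supp(C)$. The paper's chain moves along $x$ first and then along $y$, while yours does the reverse, which makes no difference; your explicit remarks that nonnegativity gives $\supp(C)=\supp(A)+\supp(B)$ and that $a,b\ge 1$ is needed for $\max\{a,b\}\le t$ spell out what the paper leaves implicit.
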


\begin{proof}
The upper bound is immediate.  For the lower bound, write
\(\supp(A)=\{x_1<\cdots<x_a\}\) and
\(\supp(B)=\{y_1<\cdots<y_b\}\).  Nonnegativity implies that the
\(a+b-1\) distinct sums
\(x_1+y_1<\cdots<x_a+y_1<x_a+y_2<\cdots<x_a+y_b\) all belong to
\(\supp(C)\).
\end{proof}

Fix an integer \(T\ge2\), and set
\[
 \beta=\max\left\{2,
   \left\lceil\frac{T}{\lceil\log_2(T+2)\rceil}\right\rceil\right\},
 \qquad
 d=\max\{1,\lceil\log_\beta n\rceil\}.
\]
Represent \(x\in[n]\) by its base-\(\beta\) digit vector
\(\mathbf x\in\{0,\ldots,\beta-1\}^d\), and put
\(\val(\mathbf u)=\sum_{j=0}^{d-1}u_j\beta^j\).  Define
\(\widehat A_{\mathbf x}=A_x\) and \(\widehat B_{\mathbf x}=B_x\), and
let \(D=\widehat A\star\widehat B\) be their \(d\)-dimensional
convolution:
\[
 D_{\mathbf u}
 =\sum_{\mathbf x+\mathbf y=\mathbf u}
   \widehat A_{\mathbf x}\widehat B_{\mathbf y},
 \qquad
 \mathbf u\in\{0,\ldots,2\beta-2\}^d.
\]

\begin{lemma}[Carry-free lifting]
\label{lem:carry_free_support}
For every \(z\),
\[
 C_z=\sum_{\mathbf u:\,\val(\mathbf u)=z}D_{\mathbf u},
 \qquad
 |\supp(D)|\le2^dt.
\]
If \(n\le T^c\) for a constant \(c\), then \(d=O_c(1)\) and consequently
\(|\supp(D)|=O_c(t)\).
\end{lemma}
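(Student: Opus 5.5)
The identity comes from linearity of \(\val\) on digit vectors. The map \(x\mapsto\mathbf x\) is a bijection from \([n]\) onto its digit vectors, and since \(\beta^d\ge n\) every \(x\in[n]\) has a digit vector of length \(d\). For any \(\mathbf x,\mathbf y\) we have \(\val(\mathbf x+\mathbf y)=\val(\mathbf x)+\val(\mathbf y)\). Summing \(D_{\mathbf u}\) over \(\val(\mathbf u)=z\) therefore gives
\[
\sum_{\val(\mathbf x)+\val(\mathbf y)=z}\widehat A_{\mathbf x}\widehat B_{\mathbf y}
=\sum_{x+y=z}A_xB_y=C_z .
\]
Here we use that every pair \((\mathbf x,\mathbf y)\) lands in exactly one fibre \(\val^{-1}(z)\).

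For the support bound I would use nonnegativity. Every term of \(D_{\mathbf u}\) is nonnegative, so \(D_{\mathbf u}\ne0\) forces \(C_{\val(\mathbf u)}\ge D_{\mathbf u}>0\). Hence \(\val\) maps \(\supp(D)\) into \(\supp(C)\), and it suffices to show that each fibre \(\{\mathbf u=\mathbf x+\mathbf y:\val(\mathbf u)=z\}\) with \(\mathbf x,\mathbf y\) digit vectors has at most \(2^d\) elements.

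This fibre count is the main step. I plan to show that \(\mathbf u\) is determined by \(z\) together with a carry vector \(\mathbf c\in\{0,1\}^d\). Perform schoolbook addition of \(\mathbf x\) and \(\mathbf y\) in base \(\beta\), with \(c_{-1}=0\) and
\[
c_j=\left\lfloor\frac{x_j+y_j+c_{j-1}}{\beta}\right\rfloor\in\{0,1\},
\]
the bound holding because \(x_j+y_j+c_{j-1}\le 2\beta-1\). The base-\(\beta\) digits of \(z\) are then \(z_j=u_j+c_{j-1}-\beta c_j\), with the top carry \(c_{d-1}\) becoming digit \(d\) of \(z\). It follows that \(u_j=z_j-c_{j-1}+\beta c_j\), so \(\mathbf u\) is a function of \(z\) and \(\mathbf c\). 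Since \(\mathbf c\) takes at most \(2^d\) values, the fibre has at most \(2^d\) elements, which gives \(|\supp(D)|\le 2^d t\).

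For the last claim, \(\beta\ge2\), and for large \(T\) we have \(\beta\ge T/\log_2(T+2)\ge T^{1/2}\). This gives \(\log_\beta n\le c\log T/\log\beta\le 2c\). The finitely many small \(T\) can be absorbed into the constant by using \(\beta\ge2\), which gives \(d\le\lceil c\log_2 T\rceil=O_c(1)\) for \(T\) bounded. Hence \(d=O_c(1)\) and \(2^d=O_c(1)\).
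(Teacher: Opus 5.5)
Your proof is correct and follows essentially the same route as the paper: you group pairs by digitwise sum to get the identity, use nonnegativity to map $\supp(D)$ into $\supp(C)$, and observe that $z$ together with the carry vector determines $\mathbf u$. The paper also notes that the two endpoint carries are fixed, giving $2^{d-1}$, but $2^d$ suffices. Finally, you use $\log\beta=\Omega(\log T)$ to get $d=O_c(1)$, as the paper does. One small slip: $\beta\ge T/\log_2(T+2)$ is not literally true, because the denominator in $\beta$ is $\lceil\log_2(T+2)\rceil$ (it fails at $T=7$, for example). Still, $\beta\ge T/(\log_2(T+2)+1)\ge T^{1/2}$ for large $T$, so your conclusion stands.
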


\begin{proof}
The first identity groups the pairs \(x+y=z\) according to their
digitwise sum.

Fix \(z\in\{0,\ldots,2n-2\}\).  Since \(n\le\beta^d\), we have
\(z<2\beta^d\), and hence \(z\) has a unique representation
\[
    z=\sum_{j=0}^{d-1}z_j\beta^j+z_d\beta^d,
    \qquad
    z_j\in\{0,\ldots,\beta-1\},\quad z_d\in\{0,1\}.
\]
For any
\(\mathbf u\in\{0,\ldots,2\beta-2\}^d\) satisfying
\(\val(\mathbf u)=z\), let \(\kappa_j\) be the carry into coordinate
\(j\), including the terminal carry \(\kappa_d\).  Then
\[
    \kappa_0=0,\qquad \kappa_d=z_d,
\]
and
\[
    u_j+\kappa_j=z_j+\beta\kappa_{j+1}
    \qquad(0\le j<d).
\]
Every \(\kappa_j\) belongs to \(\{0,1\}\).  Since the endpoint carries
\(\kappa_0\) and \(\kappa_d\) are fixed, the internal carry vector
\((\kappa_1,\ldots,\kappa_{d-1})\) determines \(\mathbf u\) through
\[
    u_j=z_j+\beta\kappa_{j+1}-\kappa_j.
\]
There are at most \(2^{d-1}\le2^d\) internal carry vectors, so at most
\(2^d\) lifted indices map to any fixed \(z\).

\(D_{\mathbf u}>0\) implies
\(C_{\val(\mathbf u)}>0\) by nonnegativity.  Summing the preceding bound
over the \(t\) indices in \(\supp(C)\) gives
\(|\supp(D)|\le2^dt\).

Finally, \(\beta=\Theta(T/\log(T+2))\), so
\(\log\beta=\Omega(\log T)\) outside a bounded range.
Consequently,
\(d\le1+\log n/\log\beta=O_c(1)\) whenever \(n\le T^c\); the bounded
values of \(T\) are absorbed by the constant.
\end{proof}

\section{Recovering Isolated Terms}
\label{sec:isolated_terms}

We recover a lifted term whenever it is isolated by an additive hash.
The benefit of the carry-free representation is that it gives
a hash with a pairwise collision probability of exactly \(1/p\) using a fixed prime.
Hashing the original indices by \(x\mapsto x\bmod p\) cannot
separate two indices congruent modulo \(p\), and multiplying by a random
nonzero scalar does not change this.  Sampling the prime instead incurs
a logarithmic loss caused by the possible prime divisors of
the difference.  In the lifted representation, by contrast, all
coordinate differences are smaller than the modulus used.

Let \(p>2\beta\) be prime, choose
\(\boldsymbol\alpha\in\mathbb F_p^d\) uniformly, and define
\(h(\mathbf u)=\langle\boldsymbol\alpha,\mathbf u\rangle\bmod p\).
The hash is additive:
\(h(\mathbf u+\mathbf v)=h(\mathbf u)+h(\mathbf v)\).

If
\(\mathbf u\ne\mathbf v\in\{0,\ldots,2\beta-2\}^d\), then
\[
 \Pr[h(\mathbf u)=h(\mathbf v)]=\frac1p.
\]
To see this, choose a coordinate \(j\) for which \(u_j\ne v_j\).
Since \(|u_j-v_j|<2\beta<p\), this difference is nonzero in
\(\mathbb F_p\).  After fixing all coordinates of
\(\boldsymbol\alpha\) except \(\alpha_j\), exactly one value of
\(\alpha_j\) satisfies
\(\langle\boldsymbol\alpha,\mathbf u-\mathbf v\rangle=0\).

Let \(R\) be any nonnegative vector on
\(\{0,\ldots,2\beta-2\}^d\). Define the bucket at $r$ to be $\{\mathbf u : h(\mathbf u) = r\}$. For each \(z\in\mathbb F_p\), define
\[
 X_z=\sum_{h(\mathbf u)=z}R_{\mathbf u},\qquad
 Y_{j,z}=\sum_{h(\mathbf u)=z}u_jR_{\mathbf u},\qquad
 Z_z=\sum_{h(\mathbf u)=z}\norm{\mathbf u}_2^2R_{\mathbf u}.
\]

These moments then determine whether only a single index was hashed to a bucket, using the fact that a discrete distribution is constant if and only if it has zero variance.

\begin{lemma}[Singleton test]
\label{lem:singleton_test}
A nonempty bucket at \(z\) contains exactly one point of \(\supp(R)\) if
and only if $X_z \ne 0$ and
\[
 X_zZ_z=\sum_{j=0}^{d-1}Y_{j,z}^2.
\]
In that case its point and coefficient are
\(u_j=Y_{j,z}/X_z\) and \(R_{\mathbf u}=X_z\).
\end{lemma}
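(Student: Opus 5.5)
The plan is to view the bucket as a weighted point set and compute the quantity \(X_zZ_z-\sum_j Y_{j,z}^2\) as a sum of squared pairwise distances. Let \(S=\{\mathbf u\in\supp(R):h(\mathbf u)=z\}\) and write \(w_{\mathbf u}=R_{\mathbf u}>0\) for \(\mathbf u\in S\). Since \(R\) is nonnegative, \(X_z=\sum_{\mathbf u\in S}w_{\mathbf u}\) vanishes exactly when \(S=\emptyset\). So for a nonempty bucket the condition \(X_z\ne0\) holds automatically, and it only serves to exclude empty buckets.

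The key step is the Lagrange-type identity. I will expand the products as double sums over \(S\times S\):
\[
 X_zZ_z-\sum_{j=0}^{d-1}Y_{j,z}^2
 =\sum_{\mathbf u,\mathbf v\in S}w_{\mathbf u}w_{\mathbf v}\Bigl(\norm{\mathbf v}_2^2-\langle\mathbf u,\mathbf v\rangle\Bigr)
 =\frac12\sum_{\mathbf u,\mathbf v\in S}w_{\mathbf u}w_{\mathbf v}\norm{\mathbf u-\mathbf v}_2^2 .
\]
The second equality follows by symmetrizing over \(\mathbf u\leftrightarrow\mathbf v\). Equivalently, dividing by \(X_z^2\) gives the sum of coordinate variances of the distribution \(w/X_z\), as in the overview. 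This identity is exact integer arithmetic over \(\mathbb Z\), not over \(\mathbb F_p\), since the moments are defined as integer sums.

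From the identity both directions follow. Every term on the right is nonnegative and the weights are strictly positive. Hence the right side is zero if and only if \(\mathbf u=\mathbf v\) for all \(\mathbf u,\mathbf v\in S\), that is, \(|S|\le1\). Combined with \(X_z\ne0\), this is exactly \(|S|=1\).

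For recovery, take \(S=\{\mathbf u\}\). Then \(X_z=R_{\mathbf u}\) and \(Y_{j,z}=u_jR_{\mathbf u}\), so \(u_j=Y_{j,z}/X_z\). The only point needing care is the identity expansion and the use of positivity of the weights, which is where nonnegativity of \(R\) is essential. With signed weights, cancellations could make a multi-point bucket pass the test. I expect no real obstacle beyond writing the expansion cleanly.
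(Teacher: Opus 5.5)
Your proof is correct and is essentially the paper's argument. The paper rewrites $X_zZ_z-\sum_jY_{j,z}^2$ as $X_z^2\sum_j\operatorname{Var}(U_j)$ for $\mathbf U$ drawn from the bucket with probabilities $R_{\mathbf u}/X_z$, and then uses the fact that zero variance forces $\mathbf U$ to be constant. Your expression $\frac12\sum_{\mathbf u,\mathbf v\in S}w_{\mathbf u}w_{\mathbf v}\norm{\mathbf u-\mathbf v}_2^2$ is the same quantity in pairwise form (variance as half the expected squared distance between two independent samples), so it just makes the nonnegativity and the ``zero iff one point'' step explicit instead of citing the variance fact.
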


\begin{proof}
Choose a random vector \(\mathbf U\) from the bucket according to
\(\Pr[\mathbf U=\mathbf u]=R_{\mathbf u}/X_z\).  For every coordinate
\(j\),
\[
 \mathbb E[U_j]
 =\frac{Y_{j,z}}{X_z},
 \qquad
 \mathbb E[U_j^2]
 =\frac{1}{X_z}
   \sum_{h(\mathbf u)=z}u_j^2R_{\mathbf u}.
\]
Therefore
\[
\begin{aligned}
 X_z^2\sum_{j=0}^{d-1}\operatorname{Var}(U_j)
 &=
 X_z^2\sum_{j=0}^{d-1}
 \left(\mathbb E[U_j^2]-\mathbb E[U_j]^2\right)
 =
 X_z\sum_{j=0}^{d-1}
 \sum_{h(\mathbf u)=z}u_j^2R_{\mathbf u}
 -\sum_{j=0}^{d-1}Y_{j,z}^2\\
 &=
 X_z\sum_{h(\mathbf u)=z}
 \norm{\mathbf u}_2^2R_{\mathbf u}
 -\sum_{j=0}^{d-1}Y_{j,z}^2
 =X_zZ_z-\sum_{j=0}^{d-1}Y_{j,z}^2.
\end{aligned}
\]
Thus it vanishes if and only if every coordinate \(U_j\)
has variance zero.  A discrete random variable has variance zero exactly
when it is constant, so this occurs if and only if \(\mathbf U\) is
supported at a single vector.  Equivalently, the bucket contains exactly
one point of \(\supp(R)\).  In that case
\(Y_{j,z}=u_jX_z\), and hence
\(u_j=Y_{j,z}/X_z\) and \(R_{\mathbf u}=X_z\).
\end{proof}

The required values can be computed without forming \(D\) using additivity.  For
\(r\in\mathbb F_p\), define
\[
\begin{array}{lll}
 A^{(0)}_r=\displaystyle\sum_{h(\mathbf x)=r}\widehat A_{\mathbf x},
&
 A^{(j)}_r=\displaystyle\sum_{h(\mathbf x)=r}
                   x_j\widehat A_{\mathbf x},
&
 A^{(\ell_2)}_r=\displaystyle\sum_{h(\mathbf x)=r}
          \norm{\mathbf x}_2^2\widehat A_{\mathbf x},
\end{array}
\]
and define the corresponding vectors for \(\widehat B\).  Write
\(\star_p\) for cyclic convolution of length \(p\).

\begin{lemma}[Computing the bucket statistics]
\label{lem:moment_convolution}
For \(R=D\), the statistics in \cref{lem:singleton_test} satisfy
\[
\begin{split}
 X&=A^{(0)}\star_pB^{(0)},\\
 Y_j&=A^{(j)}\star_pB^{(0)}
      +A^{(0)}\star_pB^{(j)},\\
 Z&=A^{(\ell_2)}\star_pB^{(0)}
      +A^{(0)}\star_pB^{(\ell_2)}
      +2\sum_{j=0}^{d-1}A^{(j)}\star_pB^{(j)}.
\end{split}
\]
They can be computed in
\(O(d(a+b)+dp\log p)\) operations.  Given already recovered terms
\((\mathbf u,D_{\mathbf u})\), their contributions can be subtracted in
\(O(d)\) time per term.
\end{lemma}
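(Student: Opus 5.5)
The identities follow from additivity of $h$ together with the fact that the moments $1$, $u_j$, and $\norm{\mathbf u}_2^2$ of a sum $\mathbf u=\mathbf x+\mathbf y$ expand into terms that each depend on $\mathbf x$ or $\mathbf y$ separately. First I will expand $X_z$ by substituting the definition of $D$:
\[
 X_z=\sum_{h(\mathbf u)=z}\sum_{\mathbf x+\mathbf y=\mathbf u}\widehat A_{\mathbf x}\widehat B_{\mathbf y}
 =\sum_{\substack{\mathbf x,\mathbf y\\ h(\mathbf x)+h(\mathbf y)=z}}\widehat A_{\mathbf x}\widehat B_{\mathbf y}.
\]
Here I use that $(\mathbf x,\mathbf y)\mapsto\mathbf x+\mathbf y$ partitions all pairs by their sum, and that $h(\mathbf x+\mathbf y)=h(\mathbf x)+h(\mathbf y)$ in $\mathbb F_p$. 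Grouping the pairs by $(r,s)=(h(\mathbf x),h(\mathbf y))$ with $r+s=z$ in $\mathbb Z_p$ gives exactly $\sum_{r+s=z}A^{(0)}_rB^{(0)}_s=(A^{(0)}\star_pB^{(0)})_z$.

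The same computation applies to the other moments, with the weight $\widehat A_{\mathbf x}\widehat B_{\mathbf y}$ multiplied by the relevant function of $\mathbf u$. For $Y_j$ I use $u_j=x_j+y_j$, which splits the sum into two pieces, each factoring into an $A$-moment times a $B$-moment. For $Z$ I use
\[
 \norm{\mathbf x+\mathbf y}_2^2=\norm{\mathbf x}_2^2+\norm{\mathbf y}_2^2+2\sum_j x_jy_j,
\]
which yields the three displayed groups of terms. All these identities hold over the integers, since the cyclic convolutions are computed on integer arrays indexed by $\mathbb Z_p$; only the index arithmetic is done mod $p$. This point needs care: the moments themselves must not be reduced mod $p$, and indeed they are not.

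For the running time, I first compute $h(\mathbf x)$ for each of the $a$ support points of $\widehat A$ in $O(d)$ time; the digits are obtained by base-$\beta$ expansion in $O(d)$ time. Then I accumulate the $d+2$ arrays $A^{(0)},A^{(j)},A^{(\ell_2)}$, each of length $p$; computing $\norm{\mathbf x}_2^2$ takes $O(d)$. Doing the same for $B$ gives $O(d(a+b)+dp)$ total, including initialization. Finally, $X$, the $Y_j$, and $Z$ need $1+2d+2+d=O(d)$ cyclic convolutions of length $p$, each costing $O(p\log p)$ by FFT, for $O(dp\log p)$. The entries stay polynomially bounded, so each occupies $O(1)$ words, as noted in the preliminaries.

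For subtraction, a recovered term $(\mathbf u,D_{\mathbf u})$ contributes $D_{\mathbf u}$ to $X_{h(\mathbf u)}$, $u_jD_{\mathbf u}$ to $Y_{j,h(\mathbf u)}$, and $\norm{\mathbf u}_2^2D_{\mathbf u}$ to $Z_{h(\mathbf u)}$. These contributions follow from the definitions with $R=D$, because the statistics are linear in $R$. Computing $h(\mathbf u)$ and the $d+2$ updates takes $O(d)$ time. The argument is a direct computation; the only place requiring attention is the bookkeeping that justifies the cyclic reindexing, which is exactly where additivity of $h$ is used.
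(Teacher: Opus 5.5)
Your proposal is correct and follows the same route as the paper: additivity of $h$ places each pair $(\mathbf x,\mathbf y)$ in bucket $h(\mathbf x)+h(\mathbf y)$, and the expansions $u_j=x_j+y_j$ and $\norm{\mathbf x+\mathbf y}_2^2=\norm{\mathbf x}_2^2+\norm{\mathbf y}_2^2+2\sum_j x_jy_j$ give the formulas with $3d+3$ cyclic convolutions, while removing a recovered term subtracts $w$, $u_jw$, and $\norm{\mathbf u}_2^2w$ from its bucket. The only difference is that you spell out bookkeeping the paper leaves implicit, namely the $O(d(a+b)+dp)$ preprocessing cost and the fact that the moments are kept over the integers and only the indices are reduced mod $p$.
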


\begin{proof}
Every pair \((\mathbf x,\mathbf y)\) contributes to bucket
\(h(\mathbf x+\mathbf y)=h(\mathbf x)+h(\mathbf y)\).  The formulas
then follow from \(u_j=x_j+y_j\) and
\(\norm{\mathbf x+\mathbf y}_2^2
=\norm{\mathbf x}_2^2+\norm{\mathbf y}_2^2
 +2\sum_jx_jy_j\).  They use \(3d+3\) cyclic convolutions.  A recovered
term of weight \(w\) is removed by subtracting \(w\), \(u_jw\), and
\(\norm{\mathbf u}_2^2w\) from its bucket.
\end{proof}

\begin{algorithm}[H]
\caption{\(\textsc{RecoveryPass}(\widehat A,\widehat B,\beta,p,
                                  \boldsymbol\alpha,\mathcal R)\)}
\label{alg:recovery_pass}
\begin{algorithmic}[1]
\Require Lifted inputs and a dictionary \(\mathcal R\) of recovered
terms
\Ensure Additional terms of \(D\)
\State Compute the statistics of \(D-\mathcal R\) using
       \cref{lem:moment_convolution}
\State \(\mathcal N\gets\emptyset\)
\For{\(z\in\mathbb F_p\)}
    \If{\(X_z>0\) and \(X_zZ_z=\sum_jY_{j,z}^2\)}
        \State \(u_j\gets Y_{j,z}/X_z\) for every \(j\)
        \If{all \(u_j\) are integral,
             \(\mathbf u\in\{0,\ldots,2\beta-2\}^d\), and
             \(h(\mathbf u)=z\)}
            \State Insert \((\mathbf u,X_z)\) into \(\mathcal N\)
        \EndIf
    \EndIf
\EndFor
\State \Return \(\mathcal N\)
\end{algorithmic}
\end{algorithm}

\begin{lemma}[Certification]
\label{lem:mass_certificate}
Every term returned by \cref{alg:recovery_pass} is a correct,
previously unrecovered term of \(D\).  A dictionary \(\mathcal R\) of
recovered terms is complete if and only if
\[
 \sum_{(\mathbf u,w)\in\mathcal R}w=\norm{A}_1\norm{B}_1.
\]
\end{lemma}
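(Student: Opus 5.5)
The plan is to prove the two claims separately, maintaining the invariant that every entry of \(\mathcal R\) is a correct term \((\mathbf u,D_{\mathbf u})\) with \(\mathbf u\in\supp(D)\), and that no index appears twice. For the first claim, assume \(\mathcal R\) satisfies this invariant. Let \(R=D-\mathcal R\) denote \(D\) with the recovered coordinates zeroed out. Then \(R\) is a nonnegative vector on \(\{0,\ldots,2\beta-2\}^d\), and \(\supp(R)=\supp(D)\setminus\mathcal R\). By \cref{lem:moment_convolution}, the statistics computed in the first line of \cref{alg:recovery_pass} are exactly \(X,Y_j,Z\) for this \(R\), because subtracting a recovered term removes precisely its contribution from its bucket.

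Now consider any \(z\) that passes the test \(X_z>0\) and \(X_zZ_z=\sum_jY_{j,z}^2\). Since \(X_z>0\) and \(R\ge0\), the bucket is nonempty. \Cref{lem:singleton_test}, applied to \(R\), then shows that the bucket contains exactly one point \(\mathbf u\in\supp(R)\), with \(u_j=Y_{j,z}/X_z\) and \(R_{\mathbf u}=X_z\). Because \(\mathbf u\in\supp(R)\), we get \(R_{\mathbf u}=D_{\mathbf u}\) and \(\mathbf u\notin\mathcal R\), so the term is correct and new. The integrality, range, and \(h(\mathbf u)=z\) checks are therefore automatically satisfied and never reject a genuine singleton; they serve only as safety checks. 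Distinct buckets yield distinct \(\mathbf u\), since \(h(\mathbf u)=z\) determines the bucket, so \(\mathcal N\) has no duplicates and the invariant is preserved. The invariant holds for \(\mathcal R=\emptyset\), so by induction over passes every returned term is a correct, previously unrecovered term.

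For the second claim, use the invariant. Since \(\mathcal R\) consists of distinct correct terms,
\[
 \sum_{(\mathbf u,w)\in\mathcal R}w=\sum_{\mathbf u\in\mathcal R}D_{\mathbf u}\le\norm{D}_1,
\]
with equality if and only if every \(\mathbf u\in\supp(D)\) lies in \(\mathcal R\). This uses that all \(D_{\mathbf u}>0\) on the support, which holds by nonnegativity. It remains to compute
\[
 \norm{D}_1=\sum_{\mathbf x,\mathbf y}\widehat A_{\mathbf x}\widehat B_{\mathbf y}=\norm{A}_1\norm{B}_1 .
\]
This identity follows either directly from the definition or via \cref{lem:carry_free_support}, which gives \(\norm{D}_1=\norm{C}_1=\norm{A}_1\norm{B}_1\). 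Finally, completeness of \(\mathcal R\) as a set of terms of \(D\) is equivalent to \(\mathcal R\) covering \(\supp(D)\), which proves the equivalence.

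I do not expect a real obstacle. The one point needing care is that \cref{lem:singleton_test} is stated for a nonnegative \(R\). It is therefore essential that the residual \(D-\mathcal R\) remains nonnegative, that is, that no wrong term was ever subtracted. This is exactly why the two claims must be proved together by induction over passes rather than separately.
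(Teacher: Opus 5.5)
Your proof is correct and takes the same route as the paper's. Soundness follows by induction, using nonnegativity of the residual $D-\mathcal R$ together with \cref{lem:singleton_test}, and completeness follows from $\norm{D}_1=\norm{A}_1\norm{B}_1$ together with positivity of the terms on $\supp(D)$. You just spell out details the paper leaves implicit, namely the no-duplicates invariant, that distinct buckets give distinct indices, and that the extra integrality, range, and hash checks never reject a genuine singleton.
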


\begin{proof}
After the previously recovered terms are subtracted, the remaining
vector is nonnegative.  Soundness therefore follows inductively from
\cref{lem:singleton_test}.  Since
\(\norm{D}_1=\norm{\widehat A}_1\norm{\widehat B}_1
=\norm{A}_1\norm{B}_1\), the displayed equality
holds exactly when no positive term remains.
\end{proof}

\section{A Recovery Trial}
\label{sec:recovery_trial}

When at most \(r\) terms remain and \(p\ge64r\), any fixed term collides
with another residual term with probability at most \(1/64\).  A recovery
pass therefore removes most residual terms in expectation.  We use
moduli geometrically decreasing in size and perform progressively more
passes at later stages, making the probabilities that any stage fails to
halve the residual small enough for a union bound.

The geometric stages stop once at most
\(\beta=\Theta(T/\log T)\) terms remain with constant probability.  We then perform
\(O(\log T)\) passes with a modulus of size \(\Theta(\beta)\), which recovers all
remaining terms with constant probability.  Each final pass takes
\(O(T+\beta\log\beta)=O(T)\) time.  Thus the final stage
costs \(O(T\log T)\).
Note that the condition $p > 2 \beta$ holds throughout since all the upper bounds on the residual we assume are at least $\beta$ and we choose sufficient constant factors.

For a scale \(T\ge\max\{a,b,2\}\), construct the lift from
\cref{sec:carry_free} and set
\[
 r_0=2^dT,\qquad r_i=2^{-i}r_0,\qquad
 L=\left\lceil\log_2\frac{r_0}{\beta}\right\rceil.
\]
For \(i<L\), choose a prime
\(p_i\in[64\lceil r_i\rceil,128\lceil r_i\rceil]\), and choose
\(p_\star\in[64\beta,128\beta]\).  These primes exist by Bertrand's
postulate and can all be found by a linear sieve up to \(128 r_0\) in $O(r_0)$ time.

\begin{algorithm}[H]
\caption{\(\textsc{LiftedTrial}(A,B,T)\)}
\label{alg:lifted_trial}
\begin{algorithmic}[1]
\Require Nonnegative vectors \(A,B\) and
        \(T\ge\max\{|\supp(A)|,|\supp(B)|,2\}\)
\Ensure \(A\star B\), certified correct, or \(\mathsf{fail}\)
\If{\(\supp(A)=\emptyset\) or \(\supp(B)=\emptyset\)}
    \State \Return the sparsely represented zero convolution
\EndIf
\State Construct \(\beta,d,r_0,\ldots,r_L,p_0,\ldots,p_{L-1},p_\star\)
\State \(\mathcal R\gets\emptyset\)
\For{\(i=0,\ldots,L-1\)}
    \label{line:geometric-stage}
    \Comment{Stage \(i\)}
    \For{\(\ell=1,\ldots,i+1\)}
        \State Sample \(\boldsymbol\alpha\in\mathbb F_{p_i}^d\)
               uniformly
        \State Add to \(\mathcal R\) the terms returned by
               \(\Call{RecoveryPass}
               {\widehat A,\widehat B,\beta,p_i,
                \boldsymbol\alpha,\mathcal R}\)
        \If{\(|\mathcal R|>r_0\)}
            \State \Return \(\mathsf{fail}\)
        \EndIf
    \EndFor
\EndFor
\For{\(\ell=1,\ldots,
       \lceil3\log_{64}(T+2)\rceil\)}
    \label{line:final-stage}
    \Comment{Final stage}
    \State Sample \(\boldsymbol\alpha\in\mathbb F_{p_\star}^d\)
           uniformly
    \State Add to \(\mathcal R\) the terms returned by
           \(\Call{RecoveryPass}
           {\widehat A,\widehat B,\beta,p_\star,
            \boldsymbol\alpha,\mathcal R}\)
    \If{\(|\mathcal R|>r_0\)}
        \State \Return \(\mathsf{fail}\)
    \EndIf
\EndFor
\If{\(\sum_{(\mathbf u,w)\in\mathcal R}w
      \ne\norm{A}_1\norm{B}_1\)}
    \State \Return \(\mathsf{fail}\)
\EndIf
\State \(C\gets0\)
\ForAll{\((\mathbf u,w)\in\mathcal R\)}
    \State \(C_{\val(\mathbf u)}\gets C_{\val(\mathbf u)}+w\)
\EndFor
\State \Return \(C\)
\end{algorithmic}
\end{algorithm}

A stage consists of \(k\) consecutive recovery passes using the
same prime \(p\).  Thus each iteration \(i\) of the outer loop beginning
on \algline{alg:lifted_trial}{line:geometric-stage} is a stage consisting
of \(i+1\) passes with prime \(p_i\).  The loop beginning on
\algline{alg:lifted_trial}{line:final-stage} is one final stage consisting
of \(\lceil3\log_{64}(T+2)\rceil\) passes with prime \(p_\star\).

\begin{lemma}[Progress in one stage]
\label{lem:stage_progress}
Consider a stage that begins with at most \(r\) unrecovered terms and
consists of \(k\) independent recovery passes using a prime
\(p\ge64r\).  The probability that more than \(r/2\) terms remain after
the stage is at most \(2\cdot64^{-k}\).
\end{lemma}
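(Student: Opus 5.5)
The plan is a per-term survival bound combined with Markov's inequality. Fix the residual set \(S_0\) of unrecovered lifted terms at the start of the stage, with \(|S_0|\le r\). Let \(S_\ell\) be the residual set after the \(\ell\)-th pass of the stage. Since terms are only ever added to \(\mathcal R\), we have \(S_0\supseteq S_1\supseteq\cdots\supseteq S_k\). The goal is to show that each fixed \(\mathbf u\in S_0\) lies in \(S_k\) with probability at most \(64^{-k}\). Then \(\mathbb E|S_k|\le r\cdot64^{-k}\), and Markov's inequality gives \(\Pr[|S_k|>r/2]\le 2\cdot64^{-k}\), which is exactly the claim.

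\textbf{Step 1: an isolated term is recovered.} In pass \(\ell\), the residual vector \(R=D-\mathcal R\) is nonnegative with support \(S_{\ell-1}\). Suppose \(\mathbf u\in S_{\ell-1}\) and no other \(\mathbf v\in S_{\ell-1}\) has \(h(\mathbf v)=h(\mathbf u)\). Then the bucket at \(z=h(\mathbf u)\) contains exactly one point of \(\supp(R)\), and \cref{lem:singleton_test} gives \(X_z>0\) and \(X_zZ_z=\sum_jY_{j,z}^2\). It also gives \(Y_{j,z}/X_z=u_j\), which is integral, lies in \(\{0,\ldots,2\beta-2\}\), and hashes to \(z\). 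So all the checks in \cref{alg:recovery_pass} pass and \(\mathbf u\) is recovered, i.e.\ \(\mathbf u\notin S_\ell\). Contrapositively, \(\mathbf u\in S_\ell\) forces \(\mathbf u\in S_{\ell-1}\) together with a collision between \(\mathbf u\) and some other element of \(S_{\ell-1}\) under the hash of pass \(\ell\).

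\textbf{Step 2: conditioning across passes.} The set \(S_{\ell-1}\) is determined by the vectors \(\boldsymbol\alpha\) of the earlier passes. The vector \(\boldsymbol\alpha\) of pass \(\ell\) is fresh and independent of them. Conditioned on any history with \(\mathbf u\in S_{\ell-1}\), the pairwise collision probability \(1/p\) established before \cref{lem:singleton_test} and a union bound over the at most \(r-1\) other elements of \(S_{\ell-1}\subseteq S_0\) give
\[
\Pr[\mathbf u\in S_\ell\mid\text{history}]\le\frac{r-1}{p}<\frac{1}{64}.
\]
Here we use \(p\ge64r\); the collision bound also needs \(p>2\beta\), which holds in our application as noted earlier. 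Multiplying these conditional bounds over \(\ell=1,\ldots,k\) by the chain rule yields \(\Pr[\mathbf u\in S_k]\le64^{-k}\).

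\textbf{Step 3: conclude.} Summing over \(\mathbf u\in S_0\) gives \(\mathbb E|S_k|\le r\,64^{-k}\), and Markov's inequality finishes the proof as described above.

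The only delicate point is Step 2: the passes are not independent in their effect, since the residual set depends on earlier randomness. This is handled by conditioning on the history and using only that the residual set is a subset of \(S_0\), so its size is at most \(r\). The shrinking of the residual can only help the bound.
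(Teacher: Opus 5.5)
Your proof is correct and follows the paper's own argument: a per-term survival bound of $64^{-k}$ via conditioning on the earlier passes and a union bound over at most $r-1$ other residual terms, then linearity of expectation and Markov's inequality. Your Step~1, which checks that an isolated residual term passes every test in \cref{alg:recovery_pass}, and your remark that the collision bound needs $p>2\beta$ only make explicit what the paper leaves implicit.
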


\begin{proof}
Fix a term \(\mathbf u\) present at the beginning of the stage.
Condition on \(\mathbf u\) surviving the first \(j\) passes.  Before
pass \(j+1\), the residual contains at most \(r\) terms, and the fresh
choice of \(\boldsymbol\alpha\) causes \(\mathbf u\) to collide with
another residual term with probability at most
\((r-1)/p\le1/64\).  Therefore,
\(\mathbf u\) survives all \(k\) passes with probability at most
\(64^{-k}\).

The expected number of surviving terms is consequently at most
\(r64^{-k}\).  The conclusion follows by Markov's inequality.
\end{proof}

\begin{lemma}[Recovery trial]
\label{lem:lifted_trial}
Fix a constant \(c\), and suppose \(n\le T^c\) and
\(T \ge\max\{a,b,2\}\).  Then
\cref{alg:lifted_trial} takes \(O_c(T\log(T+2))\) operations.  It never returns an
incorrect answer.  If \(T\ge t\), it returns \(A\star B\) with
probability greater than \(9/10\).
\end{lemma}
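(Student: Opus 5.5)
The proof splits into three parts: running time, soundness, and success probability.

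\textbf{Running time.} By \cref{lem:carry_free_support} and $n\le T^c$, we have $d=O_c(1)$, so $r_0=2^dT=O_c(T)$. Since $r_0/\beta=O(\log(T+2))$, we get $L=O(\log\log(T+2))$. The sieve costs $O(r_0)$. Each pass with prime $p$ does three things.
\begin{itemize}
\item It hashes the inputs in $O(d(a+b))=O(T)$ time.
\item It runs $O(d)$ cyclic convolutions of length $p$, costing $O(p\log p)$ by \cref{lem:moment_convolution}.
\item It subtracts the recovered terms. The failure check guarantees $|\mathcal R|\le r_0$ before each pass and at most $r_0+p$ afterwards, so this costs $O(r_0+p)$.
\end{itemize}
The $O(p)$ bucket scans are cheap: each test and each recomputation of $h(\mathbf u)$ takes $O(d)$ time.

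Stage $i$ therefore costs $O\bigl((i+1)(T+r_i\log r_0)\bigr)$. Summing over stages:
\begin{itemize}
\item The $(i+1)T$ terms give $O(L^2T)=O(T(\log\log T)^2)$.
\item Since $r_i=2^{-i}r_0$, the terms $\sum_i (i+1)2^{-i}r_0\log r_0$ give $O(T\log T)$.
\item The final stage has $O(\log T)$ passes, each costing $O(T+\beta\log\beta)=O(T)$, for $O(T\log T)$ in total.
\end{itemize}
The last step, building $C$, is $O(r_0)$. Altogether the trial runs in $O_c(T\log(T+2))$.

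\textbf{Soundness.} Every term added to $\mathcal R$ is a correct, new term of $D$ by \cref{lem:mass_certificate}. The trial returns only when the mass test holds, and then $\mathcal R$ is complete. Mapping back through $\val$ gives $C$ by the first identity of \cref{lem:carry_free_support}. So the output is never incorrect. The empty-support early return is trivially correct.

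\textbf{Success probability when $T\ge t$.} By \cref{lem:carry_free_support}, $|\supp(D)|\le2^dt\le r_0$. Since recovered terms are correct and distinct, $|\mathcal R|\le r_0$ always holds, so the algorithm never returns $\mathsf{fail}$ at a size check. It can fail only through the final mass test.

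\emph{Geometric stages.} Let $E_i$ be the event that at most $r_i$ terms remain at the start of stage $i$; $E_0$ holds. Given $E_i$, we have $p_i\ge64\lceil r_i\rceil$, and the fresh randomness makes the $i+1$ passes independent. \cref{lem:stage_progress} then gives $\Pr[\neg E_{i+1}\mid E_i]\le2\cdot64^{-(i+1)}$. A union bound gives that, with probability at least $1-\sum_{i\ge0}2\cdot64^{-(i+1)}=1-2/63$, at most $r_L\le\beta$ terms remain.

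\emph{Final stage.} Each remaining term $\mathbf u$ collides with another residual term in a given pass with probability at most $\beta/p_\star\le1/64$. It therefore survives all $k=\lceil3\log_{64}(T+2)\rceil$ passes with probability at most $64^{-k}\le(T+2)^{-3}$. A union bound over at most $\beta\le T$ terms bounds the final failure probability by $T/(T+2)^3\le 1/32$ for $T\ge2$.

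The total failure probability is therefore below $2/63+1/32<1/10$. The main care point is the conditioning argument: each stage's bound must be applied to the residual at the start of that stage, which the chain of events $E_i$ handles. A second point is ensuring $p>2\beta$ for the hash property. This holds because $p_i\ge64r_i\ge64\beta\cdot2^{L-i-1}\cdot(r_0/(2^{L-1}\beta))>2\beta$ for $i<L$, and $p_\star\ge64\beta$.
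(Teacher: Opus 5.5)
Your proof is correct and follows essentially the same route as the paper. Soundness comes from \cref{lem:mass_certificate}, a union bound over the geometric stages via \cref{lem:stage_progress} gives $2/63$, a per-term survival bound with a union bound handles the final stage, and the cost accounting is the same $O(T+p\log p)$ per pass summed over geometrically shrinking moduli. Your explicit checks that the cap $|\mathcal R|\le r_0$ never fires when $T\ge t$ and that $p_i>64\beta>2\beta$ fill in details the paper leaves implicit. The one loose spot is the final aggregation by $\val(\mathbf u)$: you claim it costs $O(r_0)$ without saying how. Radix sort on these polynomially bounded keys would achieve that, or you can simply charge $O(T\log(T+2))$ for sorting as the paper does; either way the claimed bound is unaffected.
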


\begin{proof}
If either input is zero, the early return is correct and satisfies all
claimed bounds.  We may therefore assume that both inputs are nonzero.
Soundness then follows from \cref{lem:mass_certificate} and
\cref{lem:carry_free_support}.
If \(T\ge t\), initially
\(|\supp(D)|\le r_0\).  Applying \cref{lem:stage_progress} with
\(r=r_i\) and \(k=i+1\), the probability that any of the first \(L\)
stages fails to halve the residual is at most
\[
 \sum_{i\ge0}2\cdot64^{-(i+1)}=\frac2{63}.
\]
If all these stages succeed, at most \(r_L\le\beta\) terms remain.
A fixed remaining term survives all final passes with probability at
most
\[
 64^{-\lceil3\log_{64}(T+2)\rceil}\le(T+2)^{-3}.
\]
A union bound over at most \(\beta\le T+2\) terms bounds final failure
by \((T+2)^{-2}\).  Since \(T\ge2\), the total failure probability is
less than \(1/10\).

By \cref{lem:carry_free_support}, \(d=O_c(1)\) and
\(r_0=O_c(T)\).  Moreover,
\(\beta=\Theta(T/\log(T+2))\), and hence
\(L=\lceil\log_2(r_0/\beta)\rceil=O_c(\log\log(T+2))\).

A recovery pass with prime \(p\) takes
\(O_c(T+p\log p)\) time since the two input supports have total
size \(O(T)\), and every pass begins with at most \(r_0=O_c(T)\)
recovered terms because of the cap on \(\mathcal R\).  The
remaining work consists of a constant number of length-\(p\) cyclic
convolutions.

Stage \(i\) performs \(i+1\) passes with \(p_i=\Theta(r_i)\), where
\(r_i=2^{-i}r_0\).  The total cost of the first \(L\) stages is therefore
\[
 O_c\left(
   T\sum_{i=0}^{L-1}(i+1)
   +\sum_{i=0}^{L-1}(i+1)r_i\log(r_i+2)
 \right)
 =O_c(T\log(T+2)).
\]
Here the first sum is \(O(TL^2)\), while the second is
\(O_c(T\log(T+2))\) by the geometric decrease of \(r_i\).

The final stage performs \(O(\log(T+2))\) passes with
\(p_\star=\Theta(\beta)\).  Since
\(p_\star\log p_\star=O(T)\), this stage also takes
\(O_c(T\log(T+2))\) time.

The sieve takes \(O_c(T)\) time.  Across the entire trial there are
\(O_c(T)\) recovered records.  Dictionary insertions and the final
aggregation can be implemented deterministically with balanced search
trees, or by sorting the records by \(\val(\mathbf u)\) and summing
equal keys, in \(O_c(T\log(T+2))\) time.  The resulting vector \(C\) is
stored sparsely.  This cost is within the claimed bound.  Finally, the
recovered terms and all arrays maintained during one pass occupy
\(O_c(T)\) words.
\end{proof}

\section{Length Reduction}
\label{sec:length_reduction}

The carry-free lift has constant dimension only when the length $n$ is at most polynomial in the sparsity guess $T$. We now reduce the general problem to this special case using the almost-additive hashing construction and length reduction methods of Bringmann, Fischer, and Nakos~\cite{BFN22}. Their hash is similar to reducing modulo a random prime but avoids the $\log^{O(1)} n$ cost for sampling such a prime at the cost of not being additive but rather satisfying a weaker but sufficient condition. We emulate their reduction; they hash the original universe to one of size polynomial in the input sparsity using $a$ and $b$ to form a lower bound. We compute the first and second moments of the hashed vectors using \cref{alg:convolution_trial} to determine which buckets contain contributions to a single original output coordinate, and use the $\ell_1$ norm of the output to certify that all coordinates have been recovered as before.

\begin{lemma}[Linear hashing without primes {\cite[Lemma~13]{BFN22}}]
\label{lem:bfn_hash}
Let \(N\ge m\).  There exist a family
\(\mathcal H_{N,m}\) of hash functions
\(h\colon[N]\to\mathbb Z_m\) and a set
\(\Phi_{N,m}\subseteq\mathbb Z_m\), fixed independently of the sampled
function \(h\), with the following properties.
\begin{enumerate}[label=(\roman*)]
    \item Sampling \(h\in\mathcal H_{N,m}\) and evaluating \(h(x)\)
    take constant time.
    \item There is an absolute constant \(\kappa\) such that, for all
    distinct \(x,y\in[N]\) and every \(q\in\mathbb Z_m\),
    \[
        \Pr_{h\in\mathcal H_{N,m}}
        [h(x)-h(y)=q]\le\frac{\kappa}{m}.
    \]
    \item The set \(\Phi_{N,m}\) has absolute-constant size, and every
    \(h\in\mathcal H_{N,m}\) satisfies the following: for all
    \(x,y\in[N]\) such that \(x+y\in[N]\), there exists
    \(\phi\in\Phi_{N,m}\) for which
    \[
        h(x)+h(y)=h(x+y)+\phi
        \qquad\text{in }\mathbb Z_m.
    \]
\end{enumerate}
\end{lemma}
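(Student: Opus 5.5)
The plan is to construct $\mathcal H_{N,m}$ explicitly as a multiply-and-rescale family modulo a power of two, in the spirit of \cite{BFN22}; this avoids sampling primes entirely. Fix $k$ with $2^k\ge Nm$, so $k=O(\log N)$ and $2^k$ occupies $O(1)$ words. Sample $\sigma\in[2^k]$ uniformly and define
\[
 g(x)=\sigma x\bmod 2^k,\qquad h(x)=\left\lfloor \frac{m\,g(x)}{2^k}\right\rfloor\in\{0,\ldots,m-1\}\subseteq\mathbb Z_m .
\]
Property (i) is then immediate: sampling is one random word, and evaluating $h$ takes one multiplication, one reduction modulo $2^k$, one multiplication by $m$, and one division, all on $O(1)$-word integers.

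For (iii), I would use that $g$ is exactly additive modulo $2^k$. Write $g(x)+g(y)=g(x+y)+\epsilon 2^k$ with $\epsilon\in\{0,1\}$. Multiplying by $m/2^k$ gives
\[
 \frac{m g(x)}{2^k}+\frac{m g(y)}{2^k}=\frac{m g(x+y)}{2^k}+\epsilon m .
\]
Taking floors, the sum of the floors on the left lies between the floor of the right-hand side minus $1$ and that floor. Hence $h(x)+h(y)-h(x+y)\in\{\epsilon m-1,\epsilon m\}$ as integers, and reducing modulo $m$ gives $\Phi_{N,m}=\{0,-1\}$. This set does not depend on $\sigma$, as required. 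A random additive offset $\tau$ must not be included: it would shift the correction term by an amount depending on $h$.

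For (ii), fix distinct $x,y$ and write $\delta=x-y=2^s\omega$ with $\omega$ odd and $2^s<N$. Since $\sigma$ is uniform on $\mathbb Z_{2^k}$ and multiplication by $\omega$ is a bijection there, $\sigma\delta\bmod 2^k$ is uniform over the $2^{k-s}$ multiples of $2^s$. Now $g(x)-g(y)\equiv\sigma\delta$, and the event $h(x)-h(y)\equiv q$ forces $(g(x)-g(y))\bmod 2^k$ into a set $I_q$ covered by at most two arcs of length at most $2^k/m+1$. This holds because $m(g(x)-g(y))/2^k$ is within $1$ of $h(x)-h(y)$, up to multiples of $m$. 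Counting multiples of $2^s$ in $I_q$ gives
\[
 \Pr\bigl[h(x)-h(y)=q\bigr]\le \frac{2\bigl(2^k/(m2^s)+2\bigr)}{2^{k-s}}=\frac{2}{m}+\frac{2^{s+2}}{2^k}\le\frac{2}{m}+\frac{4N}{Nm}=\frac{6}{m},
\]
so $\kappa=6$ suffices.

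The main obstacle is (ii). The interval-counting argument has to handle the wraparound modulo $2^k$ and the $\pm1$ rounding slack correctly. It also has to ensure that the loss from the power-of-two part $2^s$ of $\delta$ is absorbed, and this is exactly where the choice $2^k\ge Nm$ (rather than $2^k\ge N$) is used.
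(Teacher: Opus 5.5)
Your proof is correct. It differs from the paper only in that the paper gives no argument: the lemma is imported from \cite[Lemma~13]{BFN22}, whereas you prove it from scratch with the multiply-and-rescale hash $h(x)=\lfloor m(\sigma x\bmod 2^k)/2^k\rfloor$.

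The steps check out. For (iii), additivity of $g$ modulo $2^k$ together with $\lfloor a\rfloor+\lfloor b\rfloor\in\{\lfloor a+b\rfloor-1,\lfloor a+b\rfloor\}$ gives $\Phi_{N,m}=\{0,-1\}$ for every $\sigma$. For (ii), the event $h(x)-h(y)\equiv q$ does force $D=(g(x)-g(y))\bmod 2^k$ into the two intervals on which $\lfloor mD/2^k\rfloor$ equals $q$ or $q-1$ modulo $m$. The uniform distribution on multiples of $2^s$ puts mass at most $2/m+2^{s+1}/2^k<4/m$ on these intervals. Your count is slightly looser, but $\kappa=6$ is of course fine.

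Compared with the bare citation, your version makes explicit exactly the features the paper relies on later.
\begin{itemize}
\item $m$ need not be a power of two, which matches the paper's choice $m=\lceil100\kappa\varphi_0^2(ab)^2\rceil$.
\item One may take $\varphi_0=2$.
\item $\Phi$ is independent of $h$ precisely because there is no random offset.
\item Difference-uniformity for every $q$, not just $q=0$, needs $2^k\ge Nm$ rather than $2^k\ge N$. Otherwise $x-y=2^{k-1}$ confines $h(x)-h(y)$ to $O(1)$ residues, each with constant probability.
\end{itemize}
The last two points are exactly what the union bound over fixed differences $\phi'-\phi$ in the proof of \cref{lem:convolution_trial} needs.
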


Fix an absolute constant \(\varphi_0\) such that
\[
    |\Phi_{N,m}|\le\varphi_0
\]
for every \(N\ge m\), and set
\[
    m:=\left\lceil100\kappa\varphi_0^2(ab)^2\right\rceil.
\]
If \(2n\le m\), then \(n=O((ab)^2)\), so the original length is already
polynomial in the input sparsities and no length reduction is necessary.
Suppose henceforth that \(2n>m\), sample
\(h\in\mathcal H_{2n,m}\), and write
\[
    \Phi:=\Phi_{2n,m}.
\]
This set is fixed independently of the sampled \(h\).  Moreover,
\(h(x)\), \(h(y)\), and \(h(x+y)\) are defined for every
\(x,y\in[n]\), since \(x+y\in[2n]\).

For a vector \(V\) supported on \([2n]\), define
\((\partial V)_x=xV_x\), \((\partial^2V)_x=x^2V_x\), and
\((hV)_r=\sum_{x:\,h(x)=r}V_x\).  Let \(\star_m\) denote cyclic
convolution of length \(m\).

\begin{lemma}[Sparsity of the reduced convolutions]
\label{lem:reduced_sparsity}
For
\(P\in\{A,\partial A,\partial^2A\}\) and
\(Q\in\{B,\partial B,\partial^2B\}\), we have
\[
    |\supp(hP\star_mhQ)|\le|\Phi|t.
\]
Moreover, the ordinary convolution \(hP\star hQ\) has at most
\(2|\Phi|t\) nonzero coordinates.
\end{lemma}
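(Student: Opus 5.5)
We bound the support of \(hP\star_m hQ\) by a support-inclusion argument: every nonzero coordinate of it is an element of \(h(\supp C)+\Phi\). Nonnegativity will rule out cancellation, and the almost-additivity property (iii) of \cref{lem:bfn_hash} will pin down which buckets can receive mass. The ordinary-convolution bound then follows by folding back to the cyclic case.

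First we expand the cyclic convolution. By definition,
\[
 (hP\star_m hQ)_r=\sum_{\substack{x,y\in[n]\\ h(x)+h(y)=r}}P_xQ_y .
\]
For \(P\in\{A,\partial A,\partial^2A\}\) we have \(P_x=x^kA_x\) with \(k\in\{0,1,2\}\), and similarly \(Q_y=y^{k'}B_y\). Hence every summand is nonnegative, and it is nonzero only if \(A_x>0\) and \(B_y>0\), with the extra condition \(x\ne0\) (resp.\ \(y\ne0\)) when \(k>0\) (resp.\ \(k'>0\)).

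Suppose now that \((hP\star_m hQ)_r\ne0\). Then some pair \((x,y)\) with \(A_xB_y>0\) satisfies \(h(x)+h(y)=r\). Nonnegativity gives \(C_{x+y}\ge A_xB_y>0\), so \(x+y\in\supp C\). Since \(x+y\in[2n]\), property (iii) of \cref{lem:bfn_hash} (applied with \(N=2n\)) provides some \(\phi\in\Phi\) with \(r=h(x+y)+\phi\). Therefore
\[
 \supp(hP\star_m hQ)\subseteq\{h(z)+\phi: z\in\supp C,\ \phi\in\Phi\},
\]
and the right-hand side has at most \(|\Phi|t\) elements.

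For the ordinary (non-cyclic) convolution, \(hP\) and \(hQ\) are supported on \([m]\), so \(hP\star hQ\) lives on \([2m-1]\). Because all entries are nonnegative, \(hP\star_m hQ\) is obtained by folding: its coordinate \(r\) equals \((hP\star hQ)_r+(hP\star hQ)_{r+m}\), a sum of nonnegative terms with no cancellation. Consequently, if \((hP\star hQ)_s\ne0\) then \((hP\star_m hQ)_{s\bmod m}\ne0\). Each nonzero cyclic coordinate has at most two preimages \(s\in[2m-1]\), which gives the bound \(2|\Phi|t\).

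The only delicate point is ensuring that no cancellation ever creates spurious support or hides a reason for support. Here everything is nonnegative, including the moments \(x^k\) because indices are nonnegative, so the containment argument is clean. The one fact to double-check is that property (iii) is stated for \(h\in\mathcal H_{2n,m}\) with \(x+y\in[2n]\), which holds for all \(x,y\in[n]\), as the paper already notes.
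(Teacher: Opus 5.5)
Your proof is correct and follows essentially the same route as the paper: nonnegativity puts \(x+y\) in \(\supp(A\star B)\), and almost-additivity then places every nonzero cyclic coordinate in \(\bigcup_{z\in\supp(A\star B)}(\{h(z)\}+\Phi)\). The ordinary-convolution bound then follows because folding modulo \(m\) causes no cancellation and each residue has at most two preimages in \([2m-1]\).
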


\begin{proof}
If \(P_xQ_y>0\), then \(x\in\supp(A)\) and \(y\in\supp(B)\).  Hence
\(z:=x+y\) belongs to \(\supp(A\star B)\), and almost-additivity gives
\(h(x)+h(y)\in \{h(z)\}+\Phi\).  Therefore
\[
    \supp(hP\star_mhQ)
    \subseteq
    \bigcup_{z\in\supp(A\star B)}(\{h(z)\}+\Phi),
\]
which proves the first claim.

Let \(F=hP\star hQ\).  Since \(F\) is nonnegative, every nonzero
coordinate of \(F\) remains nonzero after reducing modulo \(m\).
Each coordinate of the cyclic convolution has at most two preimages in
the ordinary convolution, and hence
\(|\supp(F)|\le2|\supp(hP\star_mhQ)|\le2|\Phi|t\).
\end{proof}

We next use the scalar analog of \cref{lem:singleton_test}.  Compute
\[
\begin{split}
    X&=hA\star_mhB,\\
    Y&=h(\partial A)\star_mhB+hA\star_mh(\partial B),\\
    Z&=h(\partial^2A)\star_mhB
       +2h(\partial A)\star_mh(\partial B)
       +hA\star_mh(\partial^2B).
\end{split}
\]
For \(r\in\mathbb Z_m\), define
\[
    W^{(r)}_z
    :=\sum_{\substack{x+y=z\\h(x)+h(y)=r}}A_xB_y.
\]
Note that the condition $h(x) + h(y) = r$ is different from before, as we no longer have additivity, we also change the definition of bucket at $w$ accordingly to $\{(x, y) \in \supp(A) \times \supp(B) : h(x) + h(y) = w\}$. It's easy to see that \(X_r=\sum_zW^{(r)}_z\),
\(Y_r=\sum_zzW^{(r)}_z\), and
\(Z_r=\sum_zz^2W^{(r)}_z\).  Then $X_r > 0$ and $Y_r^2=X_rZ_r$ if and only if there is only one index $z$ contributing to this bucket.

\begin{algorithm}[H]
\caption{\(\textsc{ConvolutionTrial}(A,B,T)\)}
\label{alg:convolution_trial}
\begin{algorithmic}[1]
\Require Nonnegative \(A,B \in \mathbb Z_{\ge0}^n\) and
        \(T\ge\max\{|\supp(A)|,|\supp(B)|,2\}\)
\Ensure \(A\star B\), certified correct, or \(\mathsf{fail}\)
\State \(a\gets|\supp(A)|\), \(b\gets|\supp(B)|\)
\State \(m\gets\lceil100\kappa\varphi_0^2(ab)^2\rceil\)
\If{\(2n\le m\)}
    \State \Return \(\Call{LiftedTrial}{A,B,T}\)
\EndIf
\State Sample \(h\colon[2n]\to\mathbb Z_m\) using
       \cref{lem:bfn_hash}
\State Construct \(hA,h(\partial A),h(\partial^2A)\) and the analogous
       vectors for \(B\)
\State \(\mathcal Q\gets\{(hA,hB),(h(\partial A),hB),
       (hA,h(\partial B)),(h(\partial^2A),hB),
       (h(\partial A),h(\partial B)),(hA,h(\partial^2B))\}\)
\ForAll{\((P,Q)\in\mathcal Q\)}
    \State Run
    \(\Call{LiftedTrial}{P,Q,2\varphi_0T}\) twice independently
    \If{both trials fail}
        \State \Return \(\mathsf{fail}\)
    \EndIf
    \State Let \(F\) be either successful ordinary convolution and
       store its reduction \(\overline F\), where, for
       \(r\in\mathbb Z_m\),
       \[
           \overline F_r
           :=\sum_{\substack{0\le q\le2m-2\\q\equiv r\pmod m}}F_q
           =F_r+F_{r+m},
       \]
       interpreting an out-of-range coefficient as zero
\EndFor
\State Form \(X,Y,Z\) from the six stored cyclic convolutions
\State \(\widetilde C\gets0\)
\For{\(r\in\supp(X)\)}
    \If{\(Y_r^2=X_rZ_r\)}
        \State \(z\gets Y_r/X_r\)
        \If{\(z\) is an integer in \(\{0,\ldots,2n-2\}\)}
            \State
            \(\widetilde C_z\gets\widetilde C_z+X_r\)
        \EndIf
    \EndIf
\EndFor
\If{\(\norm{\widetilde C}_1=\norm{A}_1\norm{B}_1\)}
    \State \Return \(\widetilde C\)
\Else
    \State \Return \(\mathsf{fail}\)
\EndIf
\end{algorithmic}
\end{algorithm}

\begin{lemma}[Convolution trial]
\label{lem:convolution_trial}
\Cref{alg:convolution_trial} takes
\(O(T\log(T+2))\) operations.  It never returns an
incorrect answer.  If \(T\ge t\), it returns \(A\star B\) with
probability at least \(9/10\).
\end{lemma}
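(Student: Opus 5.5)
The proof splits into the early-return branch and the hashed branch.

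\emph{Early-return branch.} If \(2n\le m\), the algorithm calls \cref{alg:lifted_trial} directly. In this case \(n=O((ab)^2)=O(T^4)\) because \(T\ge\max\{a,b\}\). So \cref{lem:lifted_trial} with \(c=5\) gives all three claims at once: the \(O(T\log(T+2))\) running time, soundness, and success probability greater than \(9/10\) when \(T\ge t\).

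\emph{Hashed branch.} Here we apply \cref{lem:lifted_trial} to each of the twelve calls \(\textsc{LiftedTrial}(P,Q,2\varphi_0T)\), viewing \(P,Q\) as vectors of length \(m\).

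\emph{Hypotheses of \cref{lem:lifted_trial}.} We check that they hold for every call.
\begin{itemize}
\item Length. We need \(m\le(2\varphi_0T)^c\) for a fixed \(c\). This holds with \(c=5\) since \(m=O(T^4)\).
\item Input supports. We need \(2\varphi_0T\ge\max\{|\supp P|,|\supp Q|,2\}\). This holds because \(|\supp hP|\le a\le T\).
\item Nonnegativity. We have \(P,Q\ge0\), since indices are nonnegative and \(A,B\ge0\).
\item Output sparsity. By \cref{lem:reduced_sparsity}, \(|\supp(P\star Q)|\le2|\Phi|t\le2\varphi_0T\) whenever \(T\ge t\).
\end{itemize}

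\emph{Running time.} Each call therefore costs \(O(T\log(T+2))\), and there are twelve of them. The remaining work is linear:
\begin{itemize}
\item sampling \(h\) and building the six hashed vectors takes \(O(a+b)\) time by \cref{lem:bfn_hash}(i);
\item forming the reductions \(\overline F\) and the arrays \(X,Y,Z\) takes \(O(T)\) time, since the sparse outputs have \(O(T)\) entries;
\item the scan over \(\supp X\) and the aggregation into \(\widetilde C\) take \(O(T\log T)\) time using a dictionary or sorting.
\end{itemize}
All numbers fit in \(O(1)\) words, as stated in the preliminaries.

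\emph{Soundness.} Each successful lifted trial is exact by \cref{lem:lifted_trial}, so \(X,Y,Z\) are the true bucket statistics. The statistics satisfy \(X_r=\sum_zW^{(r)}_z\), \(Y_r=\sum_z zW^{(r)}_z\) and \(Z_r=\sum_z z^2W^{(r)}_z\), with \(W^{(r)}\ge0\). By the scalar variance identity (Cauchy--Schwarz), \(Y_r^2=X_rZ_r\) with \(X_r>0\) holds iff \(W^{(r)}\) is supported on a single \(z\). In that case \(z=Y_r/X_r\) and \(X_r=W^{(r)}_z\). Hence every addition to \(\widetilde C\) adds the true mass \(W^{(r)}_z\), and since \(\sum_rW^{(r)}_z=C_z\), we get \(\widetilde C\le C\) coordinatewise. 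Because \(\norm{C}_1=\norm{A}_1\norm{B}_1\), the final \(\ell_1\) check passes exactly when \(\widetilde C=C\). So the output is never incorrect.

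\emph{Success probability when \(T\ge t\).} Failure has two sources.

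First, some pair \((P,Q)\) has both of its independent lifted trials fail. Each trial fails with probability less than \(1/10\), so a pair fails with probability at most \(1/100\). Over six pairs this contributes at most \(6/100\).

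Second, the trials succeed but some bucket is not a singleton, i.e.\ some \(z\) is not recovered. This happens iff there exist \((x,y)\) and \((x',y')\) in \(\supp A\times\supp B\) with \(x+y\ne x'+y'\) but \(h(x)+h(y)=h(x')+h(y')\). Fix two such pairs and write \(z=x+y\) and \(z'=x'+y'\). By \cref{lem:bfn_hash}(iii),
\[
h(x)+h(y)=h(z)+\phi\quad\text{and}\quad h(x')+h(y')=h(z')+\phi'
\]
for some \(\phi,\phi'\in\Phi\). A collision then forces \(h(z)-h(z')=\phi'-\phi\). Here \(z\ne z'\) are both in \([2n]\), and \(\Phi\) is fixed independently of \(h\). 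So by \cref{lem:bfn_hash}(ii) and a union bound over the at most \(\varphi_0^2\) values of \(\phi'-\phi\), this has probability at most \(\kappa\varphi_0^2/m\). Union bounding over fewer than \((ab)^2/2\) unordered pairs of pairs gives probability at most
\[
\frac{\kappa\varphi_0^2(ab)^2}{2m}\le\frac{1}{200}.
\]

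The total failure probability is therefore at most \(6/100+1/200<1/10\).

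\emph{Main obstacle.} The delicate step is this collision bound. The union over \(\phi'-\phi\) is valid only because \(\Phi\) is fixed independently of \(h\); otherwise \(\phi'-\phi\) could depend on the sampled function and (ii) could not be applied. The union must also range over pairs of input pairs \((x,y)\), not over output indices. That is exactly why \(m\) is chosen proportional to \((ab)^2\) rather than to \(t^2\).
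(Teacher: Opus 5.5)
Your proof is correct and follows the paper's argument: soundness via the scalar variance test plus the $\ell_1$ mass certificate, the $6/100$ bound from the paired \textsc{LiftedTrial} calls, a collision bound resting on the uniform-difference property and the $h$-independence of $\Phi$, and a running time dominated by a constant number of \textsc{LiftedTrial} calls on instances of length $T^{O(1)}$. The only variation is that you union-bound over pairs of input pairs, whereas the paper union-bounds over pairs of distinct output indices $z,z'\in\supp(A\star B)$ and then uses $t\le ab$. This makes your closing remark that the union \emph{must} range over input pairs mistaken: $m$ is tied to $(ab)^2$ because $t$ is unknown to the algorithm. A minor point: your $c=5$ should be some constant depending on $\kappa$ and $\varphi_0$, since $n$ and $m$ are only bounded by $O(\kappa\varphi_0^2T^4)$.
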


\begin{proof}
We first prove soundness.  Every successful inner call returns its 
convolution by \cref{lem:lifted_trial}.  The variance test accepts only
buckets to which at most one coordinate contributes.  Since the buckets
partition the input contributions, every coefficient added to
\(\widetilde C\) is part of the corresponding coefficient of
\(A\star B\), and hence
\(\widetilde C\le A\star B\) coordinatewise.  Moreover,
\(\norm{A\star B}_1=\norm{A}_1\norm{B}_1\).  The final mass test
therefore succeeds only when \(\widetilde C=A\star B\).

Now suppose \(T\ge t\).
By \cref{lem:reduced_sparsity}, each of the six ordinary convolutions
has at most \(2|\Phi|t\le2\varphi_0T\) nonzero coordinates.  Each call
to \(\textsc{LiftedTrial}\) therefore succeeds with probability greater
than \(9/10\).  The probability that both trials fail for any fixed
convolution is less than \(1/100\), so all six convolutions are obtained
with probability at least \(1-6/100\).

Call \(h\) good if
\[
    h(z)+\phi\ne h(z')+\phi'
\]
for all distinct \(z,z'\in\supp(A\star B)\) and all
\(\phi,\phi'\in\Phi\).  For fixed \(z,z',\phi,\phi'\), equality would
imply \(h(z)-h(z')=\phi'-\phi\) in \(\mathbb Z_m\).  The
uniform-difference guarantee and a union bound give
\[
    \Pr[h\text{ is not good}]
    \le\frac{\kappa|\Phi|^2t^2}{m}
    \le\frac1{100},
\]
where the final inequality uses \(|\Phi|\le\varphi_0\) and \(t\le ab\).
If \(h\) is good, we show that each bucket contributes only to one output coordinate,
so the test succeeds.
Assume $x + y \ne z + w$, but $h(x) + h(y) = h(z) + h(w)$, then let $h(x) + h(y) = h(x + y) + \psi_0$ and $h(z) + h(w) = h(z + w) + \psi_1$, then $h(x + y) + \psi_0 = h(z + w) + \psi_1$, contradicting goodness. Thus the algorithm succeeds
with probability at least \(1-6/100-1/100>9/10\) and recovers every coordinate.  In the case of $2n \le m$,
the same guarantee follows from \cref{lem:lifted_trial}.

It remains to bound the runtime.  Put \(s=\max\{a,b,2\}\).  Since
\(m=O((ab)^2)=O(s^4)\) and \(T\ge s\), every invocation of
\(\textsc{LiftedTrial}\) has length \(T^{O(1)}\).  Its two input
supports have size at most \(s\), and its sparsity guess is \(O(T)\).
A constant number of such invocations therefore takes
\(O(T\log(T+2))\) time.

The hashed inputs are constructed sparsely.  For each nonzero input
coordinate, we generate a record consisting of its hash value and the
appropriate coefficient, discard zero records, sort the
records by hash value, and sum records with equal hash values.  Over
all six hashed vectors this takes
\(O(T\log(T+2))\) time.

Every successful inner call returns \(O(T)\) nonzero records.  Folding
their indices modulo \(m\), coalescing equal residues, and forming
\(X,Y,Z\) can likewise be done by sorting and merging
\(O(T)\) records in \(O(T\log(T+2))\) time.  The vector
\(\widetilde C\) is maintained as a sparse dictionary, so the update
\(\widetilde C_z\gets\widetilde C_z+X_r\) aggregates contributions from
different buckets having the same recovered coordinate.  Sampling and
evaluating \(h\) takes constant time per input record, and scanning
\(\supp(X)\) takes \(O(T)\) time.  All of this is within the claimed
bound.
\end{proof}

\mainthm*

\begin{proof}
By the zero-input convention in the preliminaries, we may assume that
\(t\ge1\).  Let
\[
    s=\max\{a,b,2\},
    \qquad U=2n.
\]
Run \cref{alg:convolution_trial} independently at the distinct scales
\[
    T_0=s,\qquad T_{j+1}=\min\{2T_j,U\}.
\]
At every scale below \(U\), move to the next scale if the trial fails.
Once \(U\) is reached, repeat independent trials at the fixed scale
\(T=U\) until one succeeds.  Every returned result is correct by
\cref{lem:convolution_trial}, and \(U\ge t\) because
\(t\le2n-1\).

Let \(T^\star\) be the first scheduled scale satisfying
\(T^\star\ge t\).  By \cref{lem:input_support},
\(s=O(t)\).  Since successive distinct scales grow by at most a factor
of two and the cap satisfies \(U\ge t\), it follows that
\(T^\star=O(t)\).  The total work at scales below \(T^\star\) is
\[
    O(T^\star\log(T^\star+2)).
\]

Set
\[
    R:=\frac{U}{T^\star}\ge1,
    \qquad q:=\lceil\log_2R\rceil,
    \qquad \gamma:=\log_2 10>1.
\]
Before the capped scale, the probability of reaching the \(j\)-th
scale after \(T^\star\) is at most \(10^{-j}\).  Hence the expected
work at the uncapped scales from \(T^\star\) onward is at most
\[
    \sum_{j=0}^{q-1}
       10^{-j}\,
       O\!\left(
          2^jT^\star\log(2^jT^\star+2)
       \right)
    =O(T^\star\log(T^\star+2)).
\]
Reaching the capped scale requires \(q\) consecutive failed trials and
therefore has probability at most
\[
    10^{-q}\le R^{-\gamma}.
\]
Conditional on reaching the cap, the expected number of trials at
\(U\) is at most \(10/9\).  Its unconditional expected contribution is
therefore
\[
\begin{aligned}
    O\!\left(10^{-q}U\log(U+2)\right)
    &=
    O\!\left(
       T^\star R^{1-\gamma}
       \bigl(\log(T^\star+2)+\log R\bigr)
    \right)\\
    &=O(T^\star\log(T^\star+2)),
\end{aligned}
\]
where the last equality uses \(\gamma>1\).  This proves the expected
running-time bound.  Since every scale is at most \(2n\), every array
length, address, and sampled modulus used in this execution is
\(n^{O(1)}\), as required by the fixed-word RAM model.

For the tail bound, let
\[
    q_\delta:=\left\lceil\log_{10}(1/\delta)\right\rceil.
\]
Run \(q_\delta\) independent trials at each distinct scale before
advancing; at the capped scale \(U\), continue with independent trials
if the first \(q_\delta\) all fail.  The probability that all
\(q_\delta\) trials at \(T^\star\) fail is at most
\(10^{-q_\delta}\le\delta\).  On the complementary event, the algorithm
terminates by the end of the trials at \(T^\star\), after
\[
    O\!\left(
       T^\star\log(T^\star+2)\,q_\delta
    \right)
    =
    O\!\left(
       t\log(t+2)\log(1/\delta)
    \right)
\]
work.  On the remaining event the algorithm continues at scales no
larger than \(U\) and still returns only a certified result, preserving
the Las Vegas guarantee.
\end{proof}

\paragraph{Acknowledgements}
The author thanks Jason Li for suggestions regarding the exposition of this paper.

\bibliographystyle{alpha}
\bibliography{references}

\end{document}